\documentclass{llncs}

\usepackage{amsmath}
\usepackage{graphicx}
\usepackage{latexsym}
\usepackage{theorem}
\usepackage{subfig}
\usepackage{graphicx}
\usepackage{float}
\usepackage{algorithmic}
\usepackage[ruled]{algorithm}
\usepackage{citesort}

\makeatletter
    
    \@addtoreset{algorithm}{section}
\makeatother



\theoremstyle{plain}
\theorembodyfont{\rm}
\newtheorem{fact}{Fact}





\newcommand{\Omit}[1]{}

\newcounter{Codeline}

\newcommand{\nodeboundary}[1]{\partial {#1}}

\begin{document}

\title{Physical Expander in Virtual Tree Overlay\thanks{This work is 
supported in part by Grand-in-Aid for Young 
Scientists ((B)22700010) of JSPS. Additional support from ANR projects R-Discover, SHAMAN, and ALADDIN.
}}

\titlerunning{Physical Expander in Virtual Tree Overlay}

\author{Taisuke Izumi\inst{1} \and Maria Gradinariu Potop-Butucaru\inst{2}
\and \\ Mathieu Valero\inst{2}}
        
\authorrunning{T. Izumi, M. Gradinariu Potop-Butucaru, and S. Tixeuil}

\institute{
Graduate School of Engineering, Nagoya Institute of Technology, Japan\\
\email{t-izumi@nitech.ac.jp}
\and
Universit\'{e} Pierre et Marie Curie - Paris 6, LIP6 CNRS 7606, France\\
\email{maria.gradinariu@lip6.fr, mathieu.valero@gmail.com}
}
\maketitle

\begin{abstract}
In this paper, we propose a new construction
of constant-degree expanders motivated by their application in P2P overlay networks 
and in particular in the design of robust trees overlay. 

Our key result can be stated as follows.
Consider a complete binary tree $T$ and construct a random pairing 
$\Pi$ between leaf nodes and internal nodes. We prove that the graph $G_\Pi$ 
obtained from $T$  by contracting all pairs (leaf-internal nodes) achieves a constant 
node expansion with high probability. 
The use of our result in improving 
the robustness of tree overlays is straightforward. That is,  
if each physical node participating to the overlay manages a 
random pair that couples one virtual internal node and one virtual 
leaf node then the physical-node 
layer exhibits a constant expansion with high probability. 
We encompass the difficulty of obtaining this random tree virtualization by 
proposing a local, self-organizing and churn resilient uniformly-random 
pairing algorithm with $O(\log^2 n)$ running time. 
Our algorithm has the merit 
to not modify the original tree virtual overlay 
(we just control the mapping between physical nodes and virtual nodes). 
Therefore, our scheme is general and can be applied to a large number of
tree overlay implementations. We validate its 
performances in dynamic environments via extensive simulations. 
\end{abstract}

\section{Introduction}

\paragraph{Background and Motivation}
\sloppy{
P2P networks are appealing for sharing/diffusing/searching resources among 
heterogeneous groups of users. Efficiently organizing users in order 
to achieve these goals is the main concern that motivated the study 
of overlay networks. In particular, {\em tree overlays} recently 
becomes an attractive class of overlay networks because efficient 
implementations of various communication primitives in P2P systems tied to 
the hierarchical and acyclic properties of trees such as content-based 
publish/subscribe or multicast (\cite{CK09,1009738,EGHKK03,CDKNRS03}). 
Many P2P and distributed variants of classical tree 
structures such as B-trees, R-trees or P-trees have been designed so far
\cite{BATON,VBI,PGrid,Caron06,DRTree,Brushwood}. 
}

Because of the dynamic nature of P2P networks, robustness 
to faults and churn (e.g., frequent node join and leave) is 
indispensable for the service on the top of them to function properly. 
A recent trend in measuring the robustness of overlay networks is the 
evaluation of {\em graph expansion}. The (node) expansion $h(G)$ 
of an undirected graph $G = (V_G, E_G)$ is 
defined as:
\[
h(G) = \min_{S \subseteq V_G | |S| \leq n/2} \frac{|\nodeboundary{S}|}{|S|},
\]
where $\nodeboundary{S}$ is the set of nodes that are adjacent to a node in 
$S$ but not contained in $S$. The implication of node expansion is that 
the deletion of at least $h(G) \cdot k$ nodes is necessary to disconnect a 
component of $k$ nodes in $G$. That is, graphs with good
expansion are hard to be partitioned into a number of large connected 
components.  
In this sense, the expansion of a graph can be seen as a good evaluation 
of its resilience to faults and churn. Interestingly, the expansion of 
tree overlays is trivially $O(1/n)$, which is far from adequate. 
This weakness to faults is the primary reason why tree overlays 
are not pervasive in real applications. 

Our focus in this paper is to provide the mechanism to make
tree overlays robust and suitable to real applications. 
In particular, we are interested in {\em generic}
schemes applicable to a large class of tree overlays with minimal extra cost: 
As seen above, there are many variations of tree-based data structures with 
distinguished characteristics, but their distributed implementations 
always face the problem how to circumvent the threat of disconnection.
Therefore, providing such a generic robustization scheme 
would offer the substantial benefit for implementing distributed
tree-based data structures in a systematic way.

\paragraph{Our contribution} 
Solutions for featuring P2P tree overlays with robustness range from increasing 
the connectivity of the overlay (in order to  eventually mask the network churn
and fault) to adding additional mechanism for monitoring and repairing the
overlay. However the efficiency of these techniques is shadowed by the
extra-cost needed to their implementation in dynamic settings. Moreover,
the design of those mechanisms often depends on some specific tree 
overlay implementation, and thus their generalization is difficult.
Therefore, we propose a totally novel approach that exploits the 
principal of {\em tree virtualization}. That is, in a tree overlay 
one physical node may be in charge of several virtual nodes. 
The core of our approach is to use this mapping between virtual and 
physical nodes such that the physical-node layer exhibits a good 
robustness property. 

Our primary contribution is the following theorem 
which is the key in the construction of our random virtualization scheme:
\begin{theorem} \label{thmExpansion}
Let $T$ be a complete $n$-node binary tree with duplication of the 
root node (the duplicated root is seen to be identical to
the original root). Then, we can define a bijective function 
$\Pi$ from leaf nodes to internal nodes. Let $G_\Pi$ be the 
graph obtained from $T$ by contracting pair $(v, \Pi(v))$ for 
all $v$ \footnote{The contraction of $(v, \Pi(v))$ means that we 
contract edge $\{v, \Pi(v)\}$ as if it exists in $T$.}. 
Choosing $\Pi$ uniformly at random $G_\Pi$ has a constant (node) 
expansion with high probability.
\end{theorem}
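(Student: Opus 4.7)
I would prove this via a Pinsker-style union bound over all candidate subsets $S\subseteq V(G_\Pi)$ with $|S|=s\le N/2$, where $N=(n+1)/2$ is the number of super-nodes. For each such $S$ I would bound $\Pr_\Pi[|\partial S|<cs]$ for a small constant $c>0$, then sum over $s$.

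Identifying each super-node with its internal slot, $S$ is in bijection with a set $A$ of $s$ internals, and the leaves appearing in super-nodes of $S$ form $L=\Pi^{-1}(A)$, which (conditional on $A$) is a uniformly random size-$s$ subset of the leaves of $T$. A super-node $(\ell',I')$ lies in $\partial S$ iff $I'\notin A$ and either $I'$ is a tree-neighbor in $T$ of some element of $A\cup L$, or $\ell'$ is a tree-neighbor of some element of $A$. Writing $N_I=\bigl(\partial_T^{\mathrm{int}}(A)\cup\mathrm{par}(L)\bigr)\setminus A$ and $N_L=\partial_T^{\mathrm{lf}}(A)\setminus L$, this yields the compact formula $|\partial S|=|N_I\cup\Pi(N_L)|\ge\max(|N_I|,|N_L|)$. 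I then split on $k$, the number of leaf-parents (depth-$(d-2)$ internals) contained in $A$. If $k\le s/4$, then $\mathrm{par}(L)$ spans $\Omega(s)$ distinct leaf-parents of which at most $s/4$ lie in $A$, forcing $|N_I|=\Omega(s)$ except with probability $(s/N)^{\Omega(s)}$ via a hypergeometric tail inequality. If $k>s/4$, then $\partial_T^{\mathrm{lf}}(A)$ already contains $2k>s/2$ leaves, and the random $L$ covers at most a vanishing fraction of them with the same probability, leaving $|N_L|=\Omega(s)$.

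For the summation, rather than paying the full $\binom{N}{s}\le(eN/s)^s$ I would restrict attention to subsets $A$ that are connected in the internal tree $T^{\mathrm{int}}$ (a disconnected minimizer of expansion can always be replaced by one of its components), bringing the count down to $N\cdot O(1)^s$ thanks to the bounded degree of $T$. Multiplying by the per-subset failure probability $(s/N)^{\Omega(s)}$ and summing over $s\le N/2$ gives $o(1)$, proving constant expansion with high probability.

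\textbf{Main obstacle.} The delicate step is obtaining a per-subset tail bound strong enough to absorb the union bound uniformly in $s$: a generic $e^{-\Omega(s)}$ concentration is insufficient at intermediate scales such as $s=\Theta(\log n)$, where $\binom{N}{s}$ is quasi-polynomial, so one must extract the sharper $(s/N)^{\Omega(s)}$ form from hypergeometric tail inequalities applied to carefully chosen indicator sums in each case. A secondary technicality is the overlap $|N_I\cap\Pi(N_L)|$ that shrinks the effective boundary; I sidestep it by relying only on the weak bound $|\partial S|\ge\max(|N_I|,|N_L|)$, which already delivers an $\Omega(s)$ lower bound in each of the two regimes.
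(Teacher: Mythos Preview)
Your overall architecture (fix the internal set $A$, bound the boundary via $N_I$ and $N_L$, split into cases, union bound) is reasonable, and your observation that the case $k\le s/4$ gives $|N_I|\ge s/4$ \emph{deterministically} is correct and clean. The proposal breaks, however, at the union-bound step.

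\textbf{The gap.} You write that you will ``restrict attention to subsets $A$ that are connected in the internal tree $T^{\mathrm{int}}$'', justifying this by the claim that a disconnected minimizer of expansion can be replaced by one of its components. There are two problems. First, that replacement principle is literally false for \emph{vertex} expansion (boundaries of different components can overlap); in a bounded-degree graph one can salvage it up to a factor of the maximum degree, so this alone would only cost a constant. The real issue is the second one: even once you restrict to $S$ connected in $G_\Pi$, the corresponding internal set $A$ need \emph{not} be connected in $T^{\mathrm{int}}$, because connectivity in $G_\Pi$ can be supplied entirely through the random leaf edges $\ell\leftrightarrow\mathrm{par}(\ell)$. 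So the count $N\cdot O(1)^s$ you invoke is for the wrong family of sets. Without that reduction you are left with the full $\binom{N}{s}\le (eN/s)^s$ choices of $A$, and your tail bound in the case $k>s/4$ is only of order $(s/N)^{\alpha s}$ with $\alpha\approx 1/2$ (since $|\partial_T^{\mathrm{lf}}(A)|=2k$ can be as small as $s/2$, and you need $L$ to cover almost all of it). Multiplying gives $(eN/s)^s(s/N)^{s/2}=e^s(N/s)^{s/2}$, which diverges rather than tending to $0$.

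\textbf{How the paper closes this gap.} The paper uses a different case split, designed precisely so that the probabilistic case carries a short encoding of $S$. Viewing $S$ as a set of leaves, let $k$ be the number of \emph{maximal $S$-occupied subtrees} (maximal subtrees all of whose leaves lie in $S$). If $k>|S|/16$ the paper shows $|\partial Q|\ge|S|/240$ deterministically via a structural argument on the parents of these subtrees. If $k\le|S|/16$, then $S$ is completely determined by the $k$ subtree roots, so the number of such $S$ is at most $\binom{2n}{|S|/16}$ rather than $\binom{n}{|S|}$; combined with a containment-probability bound of the form $\binom{|S|}{\alpha|S|}(|S|/n)^{\alpha|S|}$ with $\alpha\approx 11/12$, the union bound then sums to $o(1)$. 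The moral is that the count reduction must come from a \emph{deterministic, $\Pi$-independent} encoding of the bad sets, not from connectivity in the random graph $G_\Pi$; the occupied-subtree decomposition provides exactly that.
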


An immediate consequence of this theorem is that the physical-node 
layer achieves a constant expansion with high probability
if a random chosen couple composed of one leaf and one internal node 
is assigned to each physical node. It should be noted that our random 
tree virtualization does not modify the original properties of 
the tree overlay since we only control the mapping to physical nodes.
This feature yields a general applicability of our result to a large
class of tree overlay implementations.

The above result relies on the uniform random bijection (i.e.
random perfect bipartite matching) between 
internal and leaf nodes in the tree overlay. Therefore, 
in order to prove the effectiveness of our proposal in a P2P context 
we also addressed the construction of random perfect bipartite matching
over internal and leaf nodes. Interestingly, we can propose a 
local and self-organizing scheme based on the parallel random 
permutation algorithm by Czumaj et. al.\cite{CK00}. Our scheme 
allows us to increase the graph expansion to a constant within 
$O(\log^2 n)$ synchronous rounds ($n$ is the number of physical nodes). 
The quick convergence of our scheme in dynamic settings is validated 
through extended simulations. 

\paragraph{Roadmap}
In Section \ref{secRelated}, we introduce the relate work mainly in the field
of distributed computing. Section \ref{secMain} presents the proof of our
main result. The issue about the distributed implementation of our scheme
is explained in Section \ref{secDistributed} which includes the simulation 
result. Finally, Section 
\ref{secConclusion} provides the conclusion and future research issues.

\section{Related works} \label{secRelated}
Expander graphs have been studied extensively in 
many areas of theoretical computer science. A good tutorial can be 
found in \cite{HLW06}. In the following we restrict our attention 
to distributed constructions with a special emphasize on specific 
P2P design. 
   
There are several results about expander construction in distributed
settings. Most of those results are based on the 
distributed construction of random regular graphs, which exhibit
a good expansion with high probability. To the best of our knowledge 
one of the first papers that addressed expander constructions 
in peer-to-peer settings is \cite{LS03}. The authors compose $d$ 
Hamiltonian cycles to obtain a $2d$-regular graph. In \cite{RSW05} 
the authors propose a fault-tolerant expander 
construction using a pre-constructed tree overlay. 
It provides the mechanism to maintain an approximate
random $d$-regular graph under the assumption that the system always 
manages a spanning tree. The distributed construction of random 
regular graphs based on a stochastic graph transformation is also 
considered in \cite{FGMS06,CDH09}. They prove that repeating a specific
stochastic graph modification (e.g., swapping the two endpoints 
of a length-three path) eventually returns a uniformly-random sampling 
of regular graphs. Since all the previously mentioned algorithms 
are specialized in providing good expansion, the combination 
with overlays maintenance is out of their scope. Therefore, these 
works cannot be easily extended to a generic fault tolerant mechanism 
in order to improve the resiliency of a distributed overlay. 
Contrary to the previous mentioned works, our study can be seen as 
a way of identifying implicit expander properties in a given 
topological structure. There are several works along this direction. 
In \cite{DT10} the authors propose a self-stabilizing constructions 
of spanders, which are spanning subgraphs including smaller number of edges 
than the original graph but having the asymptotically same expansion
as the original\footnote{A spander is also called a {\em sparsifier}.}. 
Abraham et.al. \cite{AAY05} and Aspnes and Wieder \cite{AW08} respectively 
give the analysis of the expansion for some specific distributed data 
structures (skip graphs and skip b-trees). Recently Goyal et.al. 
\cite{GRV09} prove that given a graph $G$, the composition of two 
random spanning trees has the expansion at least $\Omega(h(G)/\log n)$, 
where $n$ is the number of node in $G$. We can differentiate our
result from the above works by its generality and the novelty of 
random tree virtualization concept.

\section{The expander property of $G_\Pi$} 
\label{secMain}

\subsection{Notations}
For an undirected graph $G$, $V_G$ and $E_G$ respectively denote the sets 
of all nodes and edges in $G$. Given a graph $G$ and a subset of nodes 
$S \subseteq V_G$, we define $\mathrm{Ind(S)}$ to be 
the subgraph of $G$ induced by $S$. 
For a set of nodes $S$, its complement
is denoted by $\overline{S}$. The {\em node boundary} of 
a set $S \subseteq V_G$ is defined as a set of nodes in $\overline{S}$ that
connect to at least one node in $S$, which is denoted by $\nodeboundary{S}$.

Let $T = (V_T, E_T)$ be a binary tree. The sets of leaf nodes and 
internal nodes for $T$ are respectively denoted by $L(V_T)$ and $I(V_T)$. 
Given a subset $S \subseteq V_T$, we also define $L(S) = L(V_T) \cap S$ 
and $I(S) = I(V_T) \cap S$. For a (sub)tree $X$, the root node of $X$ 
is denoted by $r(X)$, and the parent of $r(X)$ is denoted by $p(X)$.
 
\subsection{Preliminary results}
In the following $T$ denotes  a complete binary tree without 
explicit statement. The root of $T$ is denoted by $r(T)$. 
We prove several auxiliary results that will be further 
used in our main result.

\begin{lemma} \label{lmaConnectedBoundarySize}
For any nonempty subset $S \subseteq I(V_T)$ such that
$\mathrm{Ind}(S)$ is connected, $|\nodeboundary{S}| \geq |S| + 1$.
In particular, if $r(T) \not\in S$ holds, 
$|\nodeboundary{S}| \geq |S| + 2$.  
\end{lemma}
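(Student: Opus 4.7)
The plan is a direct counting argument that uses only the tree structure of $T$ and the fact that every internal node of a complete binary tree has exactly two children (and, if not the root, one parent). Since $\mathrm{Ind}(S)$ is connected and $T$ is a tree, $\mathrm{Ind}(S)$ is itself a tree, so it contains exactly $|S|-1$ edges.

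First I would establish a preliminary observation: in a tree $T$, if $S\subseteq V_T$ induces a connected subgraph, then every node $u\notin S$ has at most one neighbor in $S$. Otherwise, two edges from $u$ into $S$, together with a path joining their $S$-endpoints inside $\mathrm{Ind}(S)$, would form a cycle in $T$. Consequently $|\nodeboundary{S}|$ equals exactly the number of ``boundary edges'' of $S$, i.e.\ edges of $T$ with one endpoint in $S$ and the other in $V_T\setminus S$.

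Next I would count the boundary edges by a degree argument. Every vertex of $S$ is internal, so its degree in $T$ equals $3$ (two children and one parent), except for $r(T)$, whose degree is $2$. Summing over $S$ yields total degree $3|S|$ when $r(T)\notin S$ and $3|S|-1$ when $r(T)\in S$. Each of the $|S|-1$ edges internal to $\mathrm{Ind}(S)$ contributes $2$ to this sum, whereas each boundary edge contributes $1$. Hence the number of boundary edges is $|S|+2$ in the former case and $|S|+1$ in the latter, matching both bounds of the lemma (with equality, in fact).

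There is no real obstacle: the only nontrivial ingredient is the bijection between external neighbors of $S$ and boundary edges, which follows immediately from $T$ being a tree; the rest is a one-line degree computation.
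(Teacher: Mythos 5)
Your proof is correct, and it takes a genuinely different route from the paper's. The paper proves the lemma by induction on $|S|$: starting from a single non-root internal node (degree $3$, hence boundary $3 = 1+2$) and adding one adjacent internal node at a time, observing that each new node contributes two fresh boundary nodes while removing itself from the boundary. You instead give a direct double-counting argument: since $T$ is a tree and $\mathrm{Ind}(S)$ is connected, (i) $\mathrm{Ind}(S)$ has exactly $|S|-1$ edges, and (ii) no vertex outside $S$ can have two neighbors in $S$ (a cycle would result), so $|\nodeboundary{S}|$ equals the number of edges leaving $S$; the handshake computation $3|S| - 2(|S|-1) = |S|+2$ (or $3|S|-1-2(|S|-1)=|S|+1$ when $r(T)\in S$, since the root has degree $2$) then finishes the proof. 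Both arguments rest on the same two structural facts (acyclicity of $T$ and the degree of internal nodes), but yours yields exact equality rather than just the lower bound, and avoids the small gap in the paper's inductive step where one must also argue that the two new neighbors of the added node $v$ were not already in $\nodeboundary{S}$ (which again requires the no-cycle observation you made explicit). The paper's inductive formulation does transfer slightly more directly to the disconnected generalization in Lemma~\ref{lmaGeneralBoundarySize}, where boundary nodes may be shared between components and a pure edge count no longer equals the node boundary, but for the present lemma your argument is complete and, if anything, cleaner.
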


The above lemma can be generalized for any (possibly disconnected) 
subset $S \subseteq I(V_T)$.

\begin{lemma} \label{lmaGeneralBoundarySize}
Given any nonempty subset $S \subseteq I(V_T) \setminus \{r(T)\}$ such that
$\mathrm{Ind}(S)$ of $T$ consists of $m$ connected components, 
$|\nodeboundary{S}| \geq |S| + m + 1$ holds. 
\end{lemma}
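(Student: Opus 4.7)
The plan is to reduce the lemma to an exact boundary-edge count together with an acyclicity argument on an auxiliary bipartite graph. Let $S_1,\ldots,S_m$ be the connected components of $\mathrm{Ind}(S)$. Every vertex of $S$ is a non-root internal node and so has $T$-degree exactly $3$, while $\mathrm{Ind}(S)$ is a forest with $|S|-m$ edges. A standard double count of the edges of $T$ with one endpoint in $S$ and the other in $V_T \setminus S$ therefore gives
\[
b \;=\; 3|S| - 2(|S|-m) \;=\; |S| + 2m.
\]

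Next, for each $u \in \nodeboundary{S}$ set $d(u) := |N_T(u) \cap S|$. Two distinct neighbors of $u$ lying in $S$ must lie in \emph{different} components $S_i$, since otherwise one could close a cycle in $T$ through $u$; hence $d(u)$ equals the number of components of $\mathrm{Ind}(S)$ adjacent to $u$. Because $T$ is binary we have $d(u) \in \{1,2,3\}$, and writing $n_j$ for the number of boundary vertices with $d(u)=j$, both $|\nodeboundary{S}| = n_1+n_2+n_3$ and $b = n_1+2n_2+3n_3$ hold. Combining these gives the identity
\[
|\nodeboundary{S}| \;=\; |S| + 2m - (n_2 + 2n_3),
\]
so the lemma is equivalent to the bound $n_2 + 2n_3 \leq m - 1$.

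For this key estimate I would contract each $S_i$ in $T$ to a single vertex $C_i$; since the $S_i$ are connected subtrees and $T$ is acyclic, the resulting graph $T''$ is again a tree (in particular, no vertex outside $S$ has two neighbors in the same $S_i$, for otherwise $T$ would contain a cycle). Now form the bipartite graph $B$ on parts $\{C_1,\ldots,C_m\}$ and $H = \{u \in \nodeboundary{S} : d(u) \geq 2\}$, placing an edge $\{C_i,u\}$ whenever $u$ is adjacent to $S_i$ in $T$. Every edge of $B$ is an edge of $T''$, so any cycle in $B$ lifts directly to a cycle in $T''$; thus $B$ is a forest on $m+|H|$ vertices, giving $\sum_{u \in H} d(u) = |E(B)| \leq m+|H|-1$ and hence
\[
n_2 + 2n_3 \;=\; \sum_{u \in H}(d(u)-1) \;\leq\; m-1,
\]
which via the displayed identity finishes the proof. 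The chief technical obstacle is the acyclicity of $B$: it rests on the tree structure of $T''$ and on the observation that a cycle in $B$ yields a \emph{genuine} cycle in $T''$ (not a closed walk with repeats), which is automatic since $B$ is bipartite and a cycle in it has distinct vertices by definition. With that in hand, everything else is arithmetic on the triple $(n_1,n_2,n_3)$.
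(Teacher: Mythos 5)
Your proof is correct, but it takes a genuinely different route from the paper's. The paper proves this lemma by induction on the number of connected components, invoking Lemma~\ref{lmaConnectedBoundarySize} for each new component $C_{j+1}$ and arguing that its boundary overlaps the boundary of the previously accumulated components in at most one node. You instead give a direct, non-inductive counting argument: the handshake identity yields exactly $|S|+2m$ boundary edges (using that every vertex of $S$ has $T$-degree $3$ and that $\mathrm{Ind}(S)$ is a forest with $|S|-m$ edges), and the loss incurred when passing from boundary edges to boundary vertices, namely $\sum_{u}(d(u)-1)=n_2+2n_3$, is controlled by the acyclicity of the tree obtained by contracting each component --- the bipartite incidence graph between components and multiply-adjacent boundary vertices is a forest, so this loss is at most $m-1$. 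Each step checks out, including the observation that two $S$-neighbors of a boundary vertex must lie in distinct components and that cycles in $B$ lift to genuine cycles in $T''$. Your approach buys several things: it is self-contained (it does not need Lemma~\ref{lmaConnectedBoundarySize}; indeed the case $m=1$ of your identity recovers it, with equality), it produces the exact formula $|\nodeboundary{S}| = |S|+2m-(n_2+2n_3)$ showing precisely when the bound is tight, and it sidesteps the overlap bookkeeping in the paper's inductive step, which is the most delicate point of the paper's argument (two or more boundary nodes can in fact be shared between the new component's boundary and the old one's, e.g.\ when the new component is adjacent to two distinct old components through two distinct vertices). The paper's induction is shorter to state but leans on that overlap claim; yours is slightly longer but each inequality is forced by an exact count.
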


The following corollary is simply deduced from Lemma \ref{lmaGeneralBoundarySize}.

\begin{corollary} \label{corolBoundarySize}
Let $X$ be a subtree of $T$. For any subset $S \subseteq I(V_X)$,
$|\nodeboundary{S} \cap V_X| \geq |S \cap V_X|$. 
In particular, if $S$ is nonempty, we have 
$|\nodeboundary{S} \cap V_X| \geq |S \cap V_X| + 1$.
\end{corollary}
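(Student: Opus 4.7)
My plan is to apply Lemma~\ref{lmaGeneralBoundarySize} to the whole tree $T$ and then restrict the conclusion to $V_X$; the key ingredient is that $V_X$ is attached to its complement in $T$ through a single edge.

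First, I would observe that since $X$ is a (proper) subtree of $T$, the only edge of $T$ with exactly one endpoint in $V_X$ is $\{r(X),p(X)\}$. Because $S\subseteq V_X$, the only node of $V_T\setminus V_X$ that can be adjacent to $S$ is $p(X)$, so $|\nodeboundary{S}\setminus V_X|\leq 1$ and therefore $|\nodeboundary{S}\cap V_X|\geq |\nodeboundary{S}|-1$. Next, because $r(T)\notin V_X$, we have $S\subseteq I(V_T)\setminus\{r(T)\}$, and the hypothesis of Lemma~\ref{lmaGeneralBoundarySize} is met. The lemma applied to $T$ gives $|\nodeboundary{S}|\geq |S|+m+1$ for nonempty $S$, where $m$ is the number of connected components of $\mathrm{Ind}(S)$. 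Chaining the two inequalities yields $|\nodeboundary{S}\cap V_X|\geq |S|+m\geq |S|+1$, which is the desired bound. The case $S=\emptyset$ is immediate since both sides vanish.

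The only real subtlety is the boundary accounting when passing from $T$ to $V_X$: one must pin down the nodes in $\nodeboundary{S}\setminus V_X$ to a single candidate $p(X)$, which is where the rooted-subtree structure enters crucially. The implicit assumption $X\neq T$ is licensed by the paper's definition of $p(X)$ as the parent of $r(X)$, which already forces $r(X)$ to have a parent. If one wished to cover the case $X=T$ with $r(T)\in S$, Lemma~\ref{lmaGeneralBoundarySize} would not apply directly; one would instead split $T$ at $r(T)$ into its two principal subtrees and argue on each side separately, using that $r(T_L)$ (resp.\ $r(T_R)$) automatically lies in the boundary whenever it is not in $S$. That extension is not needed for the statement as written, so the proof reduces to the two-line computation above.
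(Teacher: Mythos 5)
Your argument is correct and matches the paper's intended route: the paper gives no explicit proof, stating only that the corollary is ``simply deduced'' from Lemma~\ref{lmaGeneralBoundarySize}, and your derivation is exactly such a deduction, with the one piece of bookkeeping the paper leaves implicit (that $\nodeboundary{S}\setminus V_X$ contains at most the single node $p(X)$, so restricting to $V_X$ costs at most $1$ against the lemma's surplus of $m+1$) spelled out correctly. Your remarks on the implicit assumption $X\neq T$ (which indeed holds in every application of the corollary) are also accurate.
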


\subsection{Main Result}

In what follows, $|L(V_T)|$ is denoted by $n$ for short (i.e., $n$ is 
the number of nodes in $G_\Pi$). We also assume $\Pi$ is a bijective
function from leaf nodes to the set of internal nodes (where the root 
doubly appears), which is chosen from all $n!$ possible functions
uniformly at random. For a subset of nodes 
$S \subseteq L(V_T)$, we define $\Pi(S) = \{\Pi(u) | u \in S\}$ and 
$Q_{\Pi} = S \cup \Pi(S)$. 

Provided a subset $S \subseteq L(V_T)$ satisfying 
$|S| < n/2$, we say a subtree $X$ is {\em $S$-occupied} 
if all of its leaf nodes belong to $S$.  
A $S$-occupied subtree $X$ is {\em maximal} if 
there is no $S$-occupied subtree $X$ containing $X$ as a subtree. 
Note that two $S$-occupied maximal subtrees $X_1$ and $X_2$ in a common tree $T$
are mutually disjoint and $p(X_1) \neq p(X_2)$ holds because of their maximality.
We first show two lemmas used in the main proof.

\begin{lemma} \label{lmaInternalBoundary}
Let $X$ be a maximal $S$-occupied subtree for a nonempty subset 
$S \subseteq L(V_T)$. Then, $|(\nodeboundary{Q_{\Pi}}) \cap V_X| 
\geq |\overline{Q_{\Pi}} \cap V_{X}|/2$ holds. 
\end{lemma}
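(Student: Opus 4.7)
The plan is to reduce the statement to a short child-counting inequality. Writing $B := \overline{Q_\Pi} \cap V_X$, I first want to observe that since $X$ is an $S$-occupied subtree, every leaf of $V_X$ is in $S \subseteq Q_\Pi$, so $B$ contains only internal nodes of $X$, i.e., $B \subseteq I(V_X)$. Also, by definition of node boundary, $\nodeboundary{Q_\Pi} \cap V_X \subseteq V_X \setminus Q_\Pi = B$, so the lemma amounts to showing that the ``non-boundary'' part $C := B \setminus \nodeboundary{Q_\Pi}$ satisfies $|C| \leq |B|/2$.

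The key step is then to look at the children of each $w \in C$. Since $w \notin \nodeboundary{Q_\Pi}$, none of $w$'s $T$-neighbors lie in $Q_\Pi$. In particular, its two children (which exist because $w \in I(V_X)$, and which lie in $V_X$ because $V_X$ is a subtree of $T$) both belong to $V_X \cap \overline{Q_\Pi} = B$; they cannot be leaves of $X$ since those are in $S \subseteq Q_\Pi$. Hence each $w \in C$ contributes two distinct elements of $B$ as its children. Considering the map from parent-child pairs $(w, c)$ with $w \in C$ and $c$ a child of $w$ to its second coordinate $c \in B$, there are $2|C|$ such pairs, and the map is injective since each node has a unique parent in $T$. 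Therefore $2|C| \leq |B|$, whence $|\nodeboundary{Q_\Pi} \cap V_X| = |B| - |C| \geq |B|/2$, as desired.

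I do not anticipate a substantive obstacle here; the only subtlety is to verify that a child of $w \in C$ cannot be a leaf of $X$, which uses precisely the $S$-occupied hypothesis together with the fact that $V_X$ is itself a (complete) subtree of $T$. Notably, I never need to examine $w$'s parent, nor worry about the case $w = r(X)$ where the parent falls outside $V_X$: the downward child-counting alone is enough to obtain the factor of $1/2$.
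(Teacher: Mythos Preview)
Your argument is correct. Both you and the paper isolate the same dichotomy inside $B=\overline{Q_\Pi}\cap V_X$: the nodes that already lie in $\nodeboundary{Q_\Pi}$ versus the ``deeply interior'' ones (your $C$; essentially the paper's $\overline{S_1}$) all of whose neighbors in $V_X$ avoid $Q_\Pi$. The routes to the inequality $|C|\le |B|/2$ differ, though. The paper introduces a three-part decomposition $\overline{S_1},\overline{S_2},\overline{S_3}$ of $B$ and appeals to Corollary~\ref{corolBoundarySize} (hence to Lemmas~\ref{lmaConnectedBoundarySize} and~\ref{lmaGeneralBoundarySize}) to get $|\overline{S_2}|\ge|\overline{S_1}|$, from which $2|\overline{S_1}|\le|B|$ follows. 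You instead observe directly that each $w\in C$ has two children, both forced into $B$, and since distinct parents have distinct children this injects $2|C|$ pairs into $B$, giving $2|C|\le|B|$ with no appeal to the earlier boundary lemmas. Your proof is shorter and self-contained for this lemma, and it sidesteps the mild bookkeeping the paper needs around $r(X)$ and $p(X)$; the paper's version, on the other hand, leans on machinery (Lemmas~\ref{lmaConnectedBoundarySize}--\ref{lmaGeneralBoundarySize}) that is required anyway in the proof of Theorem~\ref{thmExpansion}, so its overhead is amortized. Substantively both arguments exploit the same fact---internal nodes of a binary tree have two children---but you use it at the level of an explicit injection rather than through an aggregated boundary-size bound.
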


\begin{lemma} \label{lmaContainmentProb}
Given a subset $S \subseteq L(V_T)$ such that $|S| \leq n/2$, 
let $X_0, X_1, \cdots X_k$ be all maximal $S$-occupied subtrees and 
$V_X = \cup_{i = 1}^{k} V_{X_i}$.  For any $\alpha < 1$, 
$\Pr(|\Pi(S) \cap I(V_X)| \geq \alpha|I(V_X)|) \leq 
{|S| \choose \alpha(|S| - k)}\left(\frac{(|S| - k)}{n}\right)^{\alpha(|S| - k)}$.
\end{lemma}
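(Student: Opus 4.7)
The plan is to apply a union bound over subsets of $S$, after first pinning down $|I(V_X)|$ exactly in terms of $|S|$ and $k$. The probabilistic content then reduces to a standard sampling-without-replacement estimate for a uniformly random bijection.

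First, I would establish the counting identity $|I(V_X)| = |S| - k$. Each maximal $S$-occupied subtree $X_i$ is itself a complete binary tree, so if $\ell_i := |L(V_{X_i})|$ then $|I(V_{X_i})| = \ell_i - 1$. By maximality the $X_i$ are pairwise disjoint, and every leaf $v \in S$ lies in at least one $S$-occupied subtree (in the extreme case, the singleton subtree consisting of $v$ itself), so the leaf sets $L(V_{X_i})$ partition $S$ and $\sum_i \ell_i = |S|$. Summing over the $k$ subtrees yields $|I(V_X)| = |S| - k$.

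Next, I would reduce the event $|\Pi(S) \cap I(V_X)| \geq \alpha |I(V_X)|$ to the existence of a subset $S' \subseteq S$ of size $j := \lceil \alpha (|S|-k) \rceil$ with $\Pi(S') \subseteq I(V_X)$, and apply the union bound:
\[
\Pr\bigl(|\Pi(S) \cap I(V_X)| \geq \alpha |I(V_X)|\bigr)
\leq \binom{|S|}{j} \cdot \max_{|S'| = j} \Pr\bigl(\Pi(S') \subseteq I(V_X)\bigr).
\]
For a fixed $S'$ of size $j$, since $\Pi$ is a uniformly random bijection between two sets of size $n$, revealing the images of the elements of $S'$ one at a time gives $\Pr(\Pi(S') \subseteq I(V_X)) = \prod_{i=0}^{j-1} \frac{|I(V_X)| - i}{n - i} \leq (|I(V_X)|/n)^j$, where I use that $(m-i)/(n-i) \leq m/n$ whenever $m \leq n$. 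Plugging in $|I(V_X)| = |S| - k$ yields the stated bound (with $\alpha(|S|-k)$ understood as $\lceil \alpha(|S|-k) \rceil$).

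The only step requiring real care is the counting identity $|I(V_X)| = |S| - k$, which combines the completeness of $T$ (so that each rooted subtree has exactly one more leaf than internal node) with the partition property enforced by maximality. After that, the probabilistic content is a routine union-bound calculation for sampling without replacement, and I expect no substantive obstacle.
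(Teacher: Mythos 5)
Your proposal is correct and follows essentially the same route as the paper: a union bound over subsets $S'\subseteq S$ of size $\alpha(|S|-k)$, with the fixed-subset probability bounded by the without-replacement product $\prod_{i}\frac{|I(V_X)|-i}{n-i}\leq\left(\frac{|I(V_X)|}{n}\right)^{\alpha(|S|-k)}$. The only difference is cosmetic: you derive the product by revealing images sequentially rather than by counting permutations as the paper does, and you explicitly justify the identity $|I(V_X)|=|S|-k$, which the paper simply asserts.
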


The implication of the above two lemmas is stated as follows: 
We are focusing on a subset of boundary nodes $\nodeboundary{Q_\Pi}$ 
that are associated with some ``hole'' (that is, the set of nodes not 
contained in $Q_\Pi$) in $S$-occupied subtrees. Lemma 
\ref{lmaInternalBoundary} implies that at least half of the nodes 
organizing the hole belongs to $\nodeboundary{Q_\Pi}$. Lemma 
\ref{lmaContainmentProb} bounds the probability that $S$-occupied 
subtrees has the hole with size larger or equal to $(1 - \alpha)|I(V_X)|$.
We also use the following inequality:

\begin{fact}[Jensen's inequality] \label{thmJensen}
Let $f$ be the convex function, $p_1, p_2, \cdots$ be a series of
real values satisfying
$\sum_{i=1}^{\infty} p_i = 1$, and $x_1, x_2, \cdots$ be a series
of real values. Then, the following inequality holds:
\[
\sum_{i=1}^{\infty} p_i f(x_i) \geq f(\sum_{i=1}^{\infty} p_i x_i)
\]
\end{fact}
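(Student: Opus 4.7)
The plan is to prove Jensen's inequality by first establishing the finite case via the supporting-hyperplane characterization of convex functions, and then extending to the countable case by a truncation/continuity argument. The two-point inequality $\lambda f(x) + (1-\lambda) f(y) \geq f(\lambda x + (1-\lambda) y)$ for $\lambda \in [0,1]$ is literally the definition of convexity, and induction on the number of terms is a well-known alternative route; I prefer the hyperplane approach because it avoids bookkeeping and exposes the key one-line mechanism.

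For the finite case, first I would fix $N$ and set $\mu_N = \sum_{i=1}^{N} p_i x_i / \sum_{i=1}^{N} p_i$. Because $f$ is convex, it admits a supporting line at $\mu_N$: there exists a constant $c$ (a subgradient) such that
\[
f(x) \;\geq\; f(\mu_N) + c(x - \mu_N) \qquad \text{for all } x.
\]
Substituting $x = x_i$, multiplying by $p_i \geq 0$, and summing over $i = 1,\ldots,N$ gives
\[
\sum_{i=1}^{N} p_i f(x_i) \;\geq\; \Bigl(\sum_{i=1}^{N} p_i\Bigr) f(\mu_N) + c\Bigl(\sum_{i=1}^{N} p_i x_i - \mu_N \sum_{i=1}^{N} p_i\Bigr).
\]
The bracketed term vanishes by the definition of $\mu_N$, yielding the finite version of the inequality.

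For the countable case, I would truncate. Let $S_N = \sum_{i=1}^{N} p_i$; since $\sum_{i=1}^{\infty} p_i = 1$, we have $S_N \to 1$. Applying the finite inequality to the normalized weights $p_i / S_N$ for $i \leq N$ gives
\[
\sum_{i=1}^{N} \frac{p_i}{S_N}\, f(x_i) \;\geq\; f\!\left(\sum_{i=1}^{N} \frac{p_i}{S_N}\, x_i\right).
\]
Letting $N \to \infty$, the weighted averages on both sides converge to $\sum_{i=1}^{\infty} p_i x_i$ and $\sum_{i=1}^{\infty} p_i f(x_i)$ respectively, and continuity of $f$ on the interior of its domain (automatic for convex functions) allows the limit to pass through $f$ on the right-hand side.

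The main obstacle is not the core inequality, which is essentially one line, but the analytic hygiene of the limiting step: one must ensure that the infinite sums $\sum p_i x_i$ and $\sum p_i f(x_i)$ converge (or are interpreted as $+\infty$ where appropriate), that the limit point lies in the domain of $f$, and that continuity or a dominated-convergence argument justifies interchanging limit and summation. In the uses made by the paper, the $x_i$ and $p_i$ arise from a concrete probability distribution over a finite or bounded index set, so these regularity conditions are satisfied automatically and the finite-case proof above is really the whole content.
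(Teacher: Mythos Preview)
The paper does not prove this statement at all: it is recorded as a \texttt{fact} environment (Jensen's inequality) and invoked without proof inside the argument for Theorem~\ref{thmExpansion}, where it is applied with $f(x)=2^x-1$ and finitely supported weights $p_i=l_i/m$. So there is no ``paper's own proof'' to compare your proposal against.

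Your argument is a standard and correct proof of Jensen's inequality. The supporting-line step for the finite case is clean, and the truncation-plus-continuity passage to the countable case is the usual route. Two small points of hygiene worth flagging: first, both the paper's statement and your proof silently assume $p_i\ge 0$, which is of course essential (and satisfied in the paper's application, where the $p_i$ are empirical frequencies); second, your limiting step also relies on the partial sums $\sum_{i\le N} p_i x_i$ converging to a point in the interior of the domain of $f$, which you note. Since the paper only ever uses the finite, nonnegative-weight case, your finite argument already covers everything the paper needs.
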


We give the proof of the main theorem (the statement is refined).

\addtocounter{theorem}{-1}
\begin{theorem} 
The node expansion of $G_\Pi$ is at least $\frac{1}{480}$ with 
probability $1 - o(1)$.
\end{theorem}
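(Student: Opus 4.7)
The plan is to fix an arbitrary $S\subseteq V_{G_\Pi}$ with $|S|=s\le n/2$, prove that $|\partial S|\ge s/480$ holds with probability exponentially small in $s$, and conclude by a union bound over the $\binom{n}{s}$ choices of $S$ and over $s\le n/2$. Identifying $V_{G_\Pi}$ with $L(V_T)$ via the pairing, $S$ corresponds to a set of leaves of $T$, and $Q_\Pi=S\cup\Pi(S)$ is its pre-image in $T$ before contraction. Because each node of $G_\Pi$ collapses exactly two distinct tree nodes, every node of $\partial Q_\Pi$ is contained in the pair of a unique node of $\partial S$, while each node of $\partial S$ absorbs at most two members of $\partial Q_\Pi$; hence $|\partial S|\ge |\partial Q_\Pi|/2$, and it suffices to prove $|\partial Q_\Pi|\ge s/240$ with high probability.

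Let $X_1,\dots,X_m$ be the maximal $S$-occupied subtrees and put $V_X=\bigcup_i V_{X_i}$, so that $|I(V_X)|=s-m$. Summing Lemma~\ref{lmaInternalBoundary} over the $X_i$ yields
\[
|\partial Q_\Pi\cap V_X|\ \ge\ \frac{(s-m)-|\Pi(S)\cap I(V_X)|}{2}.
\]
Choosing a fixed $\alpha\in(0,1)$ (say $\alpha=1/2$) in Lemma~\ref{lmaContainmentProb}, the ``good'' event $\{|\Pi(S)\cap I(V_X)|<\alpha(s-m)\}$ fails with probability at most $\binom{s}{\alpha(s-m)}\bigl((s-m)/n\bigr)^{\alpha(s-m)}\le \bigl(es/(\alpha n)\bigr)^{\alpha(s-m)}$, using $\binom{s}{j}\le(es/j)^j$. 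In the main regime $s-m\ge s/60$ the good event alone gives $|\partial Q_\Pi|\ge (1-\alpha)(s-m)/2\ge s/240$, and the failure term $(es/(\alpha n))^{\alpha(s-m)}$ is exponentially small in $s$, comfortably absorbing the union-bound factor $\binom{n}{s}\le(en/s)^s$.

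The principal obstacle is the degenerate regime $s-m<s/60$, in which most subtrees $X_i$ are single leaves and $V_X$ contributes only negligible boundary. Here one argues using the $m$ distinct parents $p(X_1),\dots,p(X_m)$, which lie as internal nodes outside $V_X$ and each of which belongs to $\partial Q_\Pi$ unless $p(X_i)\in\Pi(S)$. Because $\Pi$ is a uniform bijection with only $s\le n/2$ leaves in $S$, a hypergeometric-type concentration shows that at most $3m/4$ of the parents are captured by $\Pi(S)$ with high probability, so $|\partial Q_\Pi|\ge m/4\ge(59s/60)/4>s/240$. Finally, Jensen's inequality (Fact~\ref{thmJensen}) is used to combine the two regime-specific bounds by averaging their convex form over the possible values of $m$ induced by $S$, keeping the total failure probability $o(1)$ after the outer union bound and thereby delivering the constant expansion $1/480$.
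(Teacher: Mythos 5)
Your reduction to $|\partial Q_\Pi|\ge |S|/240$ and your use of Lemma~\ref{lmaInternalBoundary} summed over the maximal $S$-occupied subtrees match the paper's opening moves, but the argument breaks at the union bound, and not in a way that adjusting constants can repair. Summing a per-set failure probability over all $\binom{n}{s}\approx(en/s)^s$ sets $S$ requires that probability to be of the form $(s/n)^{cs}$ with $c>1$; Lemma~\ref{lmaContainmentProb} can only ever deliver an exponent $\alpha(s-m)<s$ (since $m\ge 1$ and $\alpha<1$), and your choice $\alpha=1/2$ with $s-m\ge s/60$ gives exponent $s/120$, which is hopelessly swamped by $(n/s)^s$. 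The paper escapes this by enumerating differently: in its probabilistic case ($k\le|S|/16$) a set $S$ is determined by the roots of its $k$ maximal occupied subtrees, so only about $\binom{2n}{|S|/16}$ sets need be considered --- an exponent of $|S|/16$ --- against which the probability exponent $\frac{11}{12}\cdot\frac{15}{16}|S|\approx 0.859|S|$ (from $\alpha=11/12$ and $s-k\ge 15s/16$) wins. Your regime split points the wrong way for this trick (it allows $m$ up to $59s/60$, hence roughly $\binom{2n}{59s/60}$ sets even with the better count), and $\alpha=1/2$ is too small in any case.

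The degenerate regime is the second gap. The paper's Case~1 ($k>|S|/16$) is entirely deterministic --- the bound $|\partial Q|\ge|S|/240$ holds for \emph{every} $\Pi$ --- and this is essential, because sets $S$ with $m$ close to $s$ are roughly $\binom{n}{s}$ in number, and a hypergeometric tail on the number of captured parents gives at best $e^{-\Theta(m)}=e^{-\Theta(s)}$, which cannot survive that union bound. Moreover, ``at most $3m/4$ of the parents are captured'' is not sufficient even when true: when parents are captured the lost boundary must be recovered one level up, which is precisely what the paper's analysis of the induced subgraph $H$ on $Q(P)$ accomplishes --- its components are monotone chains toward the root, a chain of length $j$ supports occupied subtrees carrying at least $2^j-1$ leaves of $S$, and Jensen's inequality turns this into the lower bound $m\ge|S|/120$ on the number of components, hence on $|\partial V_H|$ minus the few boundary nodes that $Q$ itself can occupy. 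Your proposed use of Jensen's inequality ``to combine the two regimes'' does not correspond to a concrete step and cannot substitute for this structural argument.
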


\begin{proof}
To prove the lemma, we show that with high probability, 
$|\nodeboundary{S}| \geq |S|/480$ holds for any subset 
$S \subseteq V_{G_\Pi}$ such that $|S| \leq n/2$. In the proof, 
we identify $L(T)$ and $V_{G_\Pi}$ as the same set, and thus 
we often refer $S$ as a subset of $L(V_T)$ without explicit 
statement. Given a set $S$, let $k$ be the number of maximal 
$S$-occupied subtrees, $\mathcal{X} = \{X_1, X_2, \cdots, X_{k}\}$ 
be all maximal $S$-occupied trees, and $V_X = \cup_{i = 1}^{k} V_{X_i}$. 
We also define $P = \{p(X_i) | X_i \in \mathcal{X}\}$. 
Throughout this proof, we omit the subscript $\Pi$ of $Q_{\Pi}$.
For a subset $Y \subseteq V_T$, let $Q(Y) = Q \cap Y$ and 
$q(Y) = |Q(Y)|$ for short.

The goal of this proof is to show that $|\nodeboundary{Q}| \geq |S|/240$ 
holds with high probability in $T$ for any given $S$. 
It follows that $|\nodeboundary{S}| \geq |S|/480$ holds
in $G_{\Pi}$. To bound $|\nodeboundary{Q}|$, we first consider 
the cardinality of two subsets $(\nodeboundary{Q}) \cap V_X$ and 
$\nodeboundary{(Q \cap \overline{V_X})}$ 
$(= \nodeboundary{Q(\overline{V_X})})$. See Figure \ref{fig:tree-Boundary} for an example.
While these subsets are not mutually disjoint, only the roots of 
$S$-occupied subtrees can be contained in
$\nodeboundary{Q(\overline{V_X})}$.  It implies
\begin{eqnarray}  
|(\nodeboundary{Q} \cap V_X) \cap 
\nodeboundary{Q(\overline{V_X})}| &\leq& q(P). \label{mathIntersection}
\end{eqnarray} 

Lemma \ref{lmaGeneralBoundarySize} and \ref{lmaInternalBoundary} lead 
to the following inequalities:
\begin{eqnarray}
|(\nodeboundary{Q}) \cap V_X| &\geq& 
\frac{1}{2} \left(\sum_{i = 1}^{k} |\overline{Q} \cap 
V_{X_i}| \right) \nonumber \\
&\geq& (|S| - k - q(V_X))/2. \label{mathInnerBoundary} \\[2mm]
|\nodeboundary{(Q(\overline{V_X})}| 
&\geq& |Q(\overline{V_X})| + 1 \nonumber \\
&\geq& (|S| - 1) - q(V_X) + 1 = |S| - q(V_X).
\label{mathOuterBoundary}
\end{eqnarray}  
By inequalities \ref{mathIntersection}, \ref{mathInnerBoundary},
and \ref{mathOuterBoundary}, we can bound the size of 
$\nodeboundary{Q}$ as follows:
\begin{eqnarray}
|\nodeboundary{Q}| &\geq& |(\nodeboundary{Q}) \cap V_X| + 
|\nodeboundary{Q(\overline{V_X})}| - q(P) \nonumber \\
&\geq& (|S| - k - q(V_X))/2 + |S| - q(V_X) - q(P) \label{mathTotalBoundary1}\\
&\geq& 3(|S| - k - q(V_X))/2 + k - q(P) \label{mathTotalBoundary2}.
\end{eqnarray}

\begin{figure}[t]
\begin{center}
\includegraphics[keepaspectratio,width=110mm]{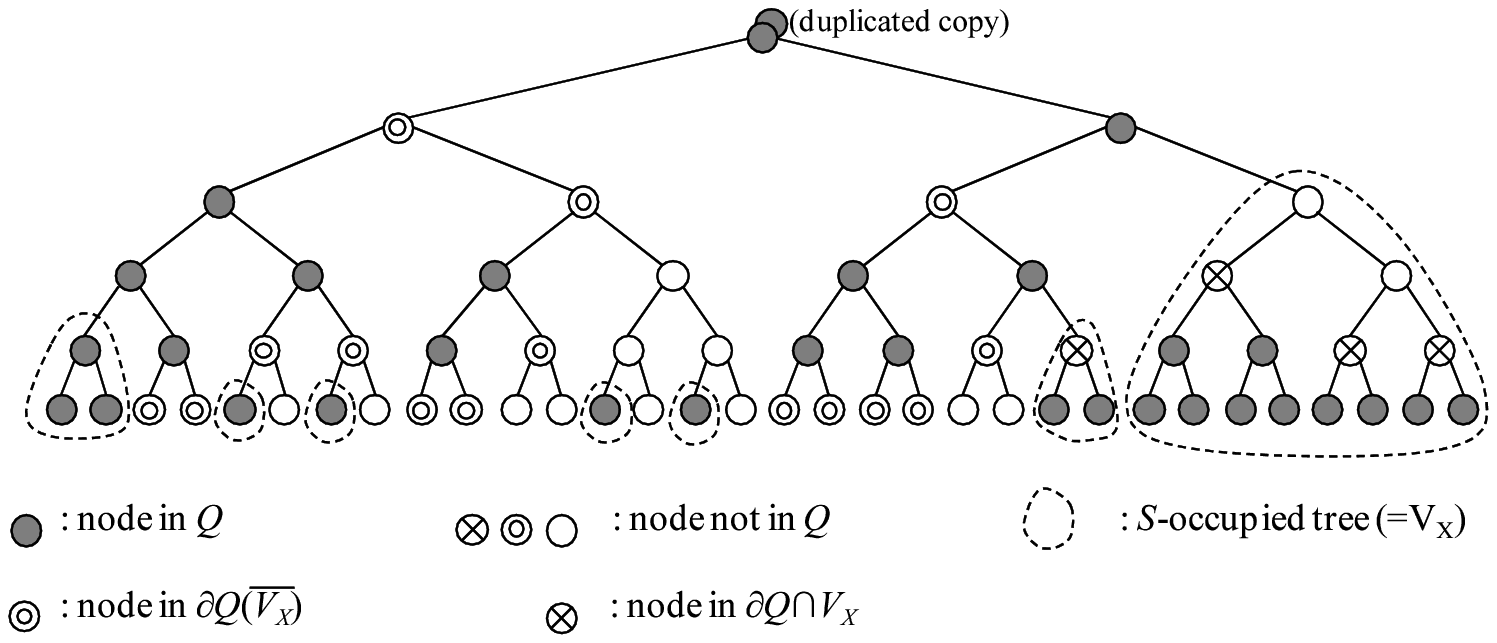}
\caption{Illustration of the set $(\nodeboundary{S})\cap V_X$, 
$\nodeboundary{Q(\overline{V_X})}$, and their boundaries.
in the proof of Theorem~\ref{thmExpansion}.}
\label{fig:tree-Boundary}
\end{center}
\end{figure}

We consider the following two cases according to the value of $k$:

\noindent
{\bf (Case 1)} $k > |S|/16$: We show $|\nodeboundary{Q}| > |S|/240$ holds for 
any $\Pi$. If $q(P) \leq 12k/13$ holds, we have 
$|\nodeboundary{Q}| \geq k/13 \geq |S|/240$ from 
inequality \ref{mathTotalBoundary2} because of $q(V_X) \leq |S| - k$.
Furthermore, if $|S| - q(V_X) - q(P) \geq |S|/240$, we have 
$|\nodeboundary{Q}| \geq |S|/240$ from inequality \ref{mathTotalBoundary1}. 
Thus, in the following argument, we assume $q(P) > 12k/13$ and 
$|S| - q(V_X) - q(P) < |S|/240$.
Consider the subgraph $H$ of $T$ induced by $Q(P)$. We estimate
the number of connected components in $H$ to get a bound 
of $|\nodeboundary{V_H}|$ in $T$. Letting 
$\mathcal{C} = \{C_1, C_2, C_3, \cdots C_m\}$ 
be the set of connected components of $H$, we associate each leaf node 
$u$ in a $S$-occupied subtree $X_i$ with the component in
$\mathcal{C}$ containing $p(X_i)$ (the node is associated 
with no component if $p(X_i)$ is not in $Q(P)$). Since 
each node $v \in H$ has one child belonging to $V_X$, each component
in $H$ forms a line graph monotonically going up to the root. Thus,
if a component $C_i$ has $j$ nodes $u_1, u_2, \cdots u_j$, which 
are numbered from the leaf side, each node $u_h$ ($1 \leq h \leq j$)
has a child as the root of $S$-occupied trees having at least 
$2^{h - 1}$ leaf nodes (recall that $T$ is a complete binary tree). 
It follows that the number of nodes in $S$ associated with $C_i$ is 
$\sum_{h=1}^{j} 2^{h - 1} \geq 2^{j} - 1$. Letting $l_i$ be 
the number of components in $\mathcal{C}$ consisting of $i$ nodes, 
we have:
\begin{eqnarray*}
|S|/m &\geq& \sum_{i=0}^{n}\frac{l_i}{m}\left(2^{i} - 1\right) \\
&\geq& 2^{\sum_{i=0}^{n} i \cdot (l_i/m)} - 1 \\
&\geq& 2^{\frac{12k}{13m}} - 1 \\
&\geq& 2^{\frac{3|S|}{52m}} - 1,
\end{eqnarray*}
where the second line is obtained by applying Jensen's inequality.
To make the above inequality hold, the condition $|S|/m \leq 120
\Leftrightarrow m \geq |S|/120$ is necessary. Next, we calculate 
how many nodes in $\nodeboundary{V_H} \cap \overline{V_X}$ is 
occupied by $Q$. From the definition of $H$, $Q(\nodeboundary{V_H})$ 
does not contain any node in $P$ (if a node $v \in P$ is contained,
it will be a member of $V_H$ and thus not in $\nodeboundary{V_H}$). 
Thus, any node in $\nodeboundary{V_H}$ is a member of $V_X$, 
$\overline{Q} \cap P$, or $\overline{V_X \cup P}$. Let $Y = 
Q(\overline{V_X \cup P} \cap \nodeboundary{V_H})$ and $y = |Y|$ 
for short. Since any node in $\overline{Q} \cap P$ is not contained 
in $Q$ and the cardinality of $\nodeboundary{V_H} \cup V_X$ can
be bounded by $q(P)$ (as the roots of $S$-occupied trees),
we have the following bound from Lemma \ref{lmaGeneralBoundarySize}:
\begin{eqnarray*}
|\nodeboundary{Q}| &\geq&
|\nodeboundary{V_H} \setminus Q| \\
&\geq& |(\nodeboundary{V_H}) \cap (\overline{Q} \cap P)| \\
&\geq& |\nodeboundary{V_H}| - |(\nodeboundary{V_H}) \cap V_X| -
 |(\nodeboundary{V_H}) \cap (\overline{V_X \cup P})| \\
&\geq& |\nodeboundary{V_H}| - q(P)- y \\
&\geq& q(P) + m + 1 - q(P) - y \\
&\geq& m - y.
\end{eqnarray*}

The illustration explaining this inequality is shown in 
Figure \ref{fig:tree-Boundary2}.
Since $Y$, $Q(P)$ and $Q(V_X)$ are mutually disjoint, we obtain 
$y + q(V_X) + q(P) \leq |S| \Leftrightarrow y \leq 
|S| - q(V_X) - q(P) < |S|/240$. 
Consequently, we obtain $|\nodeboundary{Q}| \geq |S|/240$.

\noindent
{\bf (Case 2)} $k \leq \frac{|S|}{16}$:
In the following argument, given a set $S$ satisfying 
$k < |S|/16$, we bound the 
probability $\Pr(|\nodeboundary{Q}| < |S|/32)$. 
From the inequality \ref{mathTotalBoundary2} and the fact of $k - q(P) \geq 0$,
we have $|\nodeboundary{Q}| \geq 3(15|S|/16 - q(V_X))/2$.
To be $|\nodeboundary{Q}| \leq |S|/32$, we need 
$3(15|S|/16 - q(V_X))/2 \leq |S|/32 \Leftrightarrow q(V_X) \geq 11|S|/12$.
Thus, from Lemma \ref{lmaContainmentProb}, we can bound the 
probability as follows:
\begin{eqnarray*}
\Pr(|\nodeboundary{Q}| < |S|/32) &\leq& \Pr(|\Pi(S)| \geq 11|I(V_X)|/12) \\
&\leq& {|S| \choose 11(|S| - k)/12}\left(\frac{|S| - k}{n}\right)^{11|S|/12} \\
&\leq& {|S| \choose (|S| + k)/12}\left(\frac{|S|}{n}\right)^{11|S|/12}.
\end{eqnarray*}
Fixing $k$ and $|S|$, we look at the number of possible choices of $S$. 
Since we can determine a $S$-occupied subtree $X_i$ by choosing one node 
in $T$ as its root, the set $S$ is determined by choosing $k$ nodes 
from all nodes in $T$. Thus, the total number of subset $S$ 
organizing at most $|S|/16$ $S$-occupied subtrees are bounded by 
$\sum_{i = 1}^{|S|/16} {2n \choose i} \leq 
\frac{|S|}{16} {2n \choose |S|/16}$. Summing up this bound for any 
$|S| < n/2$. The total number is bounded by $\sum_{|S| = 1}^{n/2}
\frac{|S|}{16}{2n \choose |S|/16}$. 
Using the union bound and a well-known bound ${n \choose m} 
\leq (ne/m)^m$, we have:
\begin{eqnarray*}
\lefteqn{\Pr\left(\bigcup_{S \subseteq L(V_T) | |S| \leq n/2} 
|\nodeboundary{Q}| < \frac{|S|}{32}\right)} \hspace{5mm}\\
&\leq& \sum_{|S| = 1}^{n/2} \frac{|S|}{16} 
{2n \choose {|S|/16}}{|S| \choose (|S| + k)/12}
\left(\frac{|S|}{n}\right)^{11|S|/12}. \\
&= & o(1).
\end{eqnarray*}
All the details of the previous calculation are provided in the Appendix.
The theorem is proved. \qed
\end{proof}

\begin{figure}[t]
\begin{center}
\includegraphics[keepaspectratio,width=110mm]{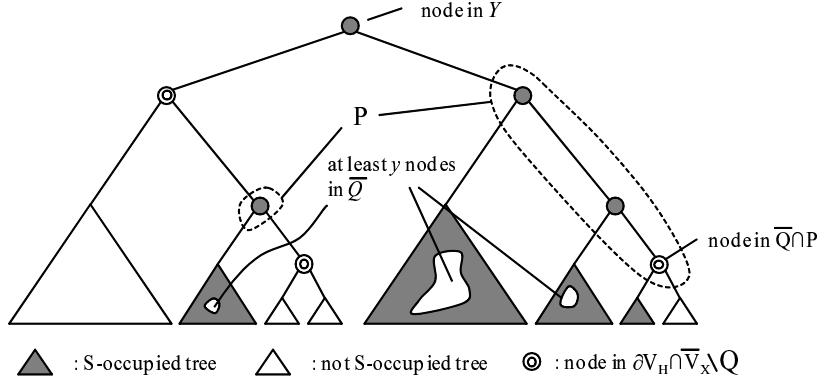}
\caption{Illustration of the boundary 
$(\overline{V_H} \cap \overline{V_X}) \setminus$ in the proof of 
Theorem~\ref{thmExpansion}.}
\label{fig:tree-Boundary2}
\end{center}
\end{figure}

\section{Distributed Construction of $G_\Pi$}
\label{secDistributed}

To prove the impact of Theorem \ref{thmExpansion} in P2P settings 
we have to construct scalably a random bijection (i.e., random 
perfect bipartite matching) between internal and leaf nodes in tree overlays. 
In this section, we show that this distributed construction is possible 
with nice self-$\ast$ properties. That is, our scheme is 
totally-distributed, uses only local information, and is self-healing 
in the event of nodes joins and leaves. In the following we state the
computational model and our network assumptions.

\subsection{Computational Model}
We consider a set of virtual nodes (peers) distributed over a 
connected physical network. Virtual nodes are structured in a  binary tree overlay. 
Each physical node managing a virtual node $v$
can communicate with any physical node managing $v$'s neighbors in the 
overlay. The communication is synchronous and round-based. That is,
the execution is divided into a sequence of consecutive rounds. All messages sent
in some round are guaranteed to be received within the same round.

We assume  that each physical node
manages exactly one internal node and one leaf node in the virtual overlay. Moreover, we also assume that 
the tree is balanced. Note that these assumptions are not far from 
practice. Most of distributed tree overlay implementations embed balancing 
schemes. The preservation of matching structure is easily guaranteed 
by employing the strategy that one physical node always join as two new 
nodes. We can refer as an example the join/leave algorithm in 
\cite{MLR07}, which is based on the above strategy and generally 
applicable to most of binary-tree overlays.

\subsection{Uniformly-Random Matching Construction}
The way leaf and internal nodes are matched via a physical node 
is generally dependent on the application requirements and is rarely chosen uniformly at random.
That is, the implicit matching offered by the overlay
may be extremely biased 
and the expansion factor computed in the previous section may not 
hold. Fortunately, the initial matching can be quickly ``mixed'' to 
obtain a uniformly-random matching. To this end we will extend 
the technique proposed by Czumaj et. al.\cite{CK00} for fast random 
permutation construction to distributed scalable matchings 
in tree overlays. The following stochastic process rapidly mixes the 
sample space of all bipartite matchings between leafs and internal nodes:
\begin{enumerate}
\item Each leaf node first tosses a fair coin and decides whether 
it is active or passive.
\item Each active node randomly proves a leaf node and sends a 
matching-exchange request.
\item The passive node receiving exactly one matching-exchange request
accepts it, and establishes the agreement to the sender of the request.
\item The internal nodes managed by the agreed pair are swapped.
\end{enumerate}

Note that the above process is performed by all leaf nodes concurrently 
in a infinite loop. Following the analysis by Czumaj et. al.\cite{CK00}, 
the mixing time of the above process is $O(\log n)$. 

The only point that may create problems in distributed P2P settings is 
the second step. For that point, we can propose a simple solution to 
implement the random sampling mechanism with $O(\log n)$ time and 
message complexity. The algorithm is as follows:  First, the prober 
sends a token to the root. From the root, the token goes down along the 
tree edges by selecting with equal probability one of its children. 
When the token reaches a leaf node, the destination is returned as 
the probe result. 

Overall, the distributed scalable algorithm for constructing a 
random bipartite matching takes $O(\log^2 n)$ time. In the following 
subsection, we evaluate the performances of the above scheme 
face to churn.

\subsection{Experimental Evaluation in Dynamic Environment}

In this section,
we experimentally validate the performance of our approach by
simulation. In the simulation scenario the following four phases
are repeated.

\begin{description}
\item[Node join] We assume that a newly-joining node knows 
the physical address of some leaf node $u$ in the network. 
Let $v$ and $v'$ be the leaf and internal nodes that will be managed 
by the newly-joining physical node.  The node $u$ is replaced by 
a newly internal node $v'$. Then $v$ and $u$ becomes children of $v'$.
\item[Node leave] The adversary chooses a number of nodes to make them
leave. Since it is hard to simulate worst-case adversarial behavior, we
adopt a heuristic strategy: given a physical node $v$ with leaf $v_L$ and 
internal node $v_I$, let $h(v)$ be the height of the smallest subtree
containing both $v_L$ and $V_I$. Intuitively, the physical node $v$
with higher $h(v)$ has much contributition for avoiding the node boundary 
to be contained in a small subtree containing $v_L$. Following this 
intuition, the adversary always makes the node $v$ with highest $h(v)$
leave.
\item[Balancing] Most of tree-based overlay algorithms have some 
balancing mechanism. While the balancing mechanism has a number of
variations, we simply assume a standard rotation mechanism.
After a number of node joins and leaves, the tree is balanced by 
standard rotation operation. 
\item[Matching Reconstruction] We run once the matching-mixing process described in the previous section.
\end{description}

Since exact computation of node expansion 
is coNP-complete, we monitor the second smallest eigenvalue 
$\lambda$ of the graph's Laplacian matrix, which has a strong corelation 
to the node expansion: a graph with the second smalles eigenvalue 
$\lambda$ is a $\lambda/2$-expander.
In the following we propose our simulations results first in a churn free setting then in environments with different churn levels. Due to the space limitation the churn-free simulations are defered to the Anexe section.

\paragraph{Without churn} We ran 100 simulations of 100 rounds with 512 nodes and no churn.
The value of $\lambda$ is calculated at the begin of each round.
Those simulations tends to emphasize what is ``expectable'' from the mixing protocol and some of its dynamic properties.

$\lambda$ varies from 0.263 to 0.524 with an average value of 0.502 and a standard deviation of 0.017.
The low standard deviation and the closeness of average and maximum reached values of $\lambda$ indicates that the minimum is rarely reached.
Basically it is obtained when most nodes become responsible of internal nodes that are ``close'' from their leaves.
Intuitively if each node is responsible of an ancestor of its leaf, there is no additionnal links between the left and the right subtrees of the root.
In that case we do not take benefit of mixing and get bad expansion properties inherited from tree structures.

\paragraph{Churn prone environments.} We ran 100 simulations of 100 rounds with 512 nodes and a given rate of churn.
Time is divided in seven rounds groups.
During the first round a given percentage of new nodes join the system.
During the second a given percentage of nodes leave the system.
During the third round the tree is balanced and the mixing protocol is run.
During other rounds the mixing protocol is run.
$\lambda$ is measured at the end of each round.
Nodes gracefully leave the system.
Those simulations investigate the impact of churn on $\lambda$ and how fast our mixing protocol restores a stable configuration.

\begin{figure}
\center
\includegraphics[angle=-90,width=.5\textwidth]{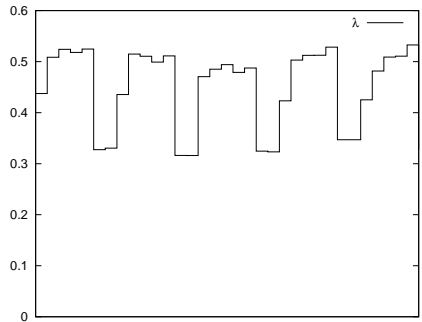} 
\caption{$\lambda$ over time with 10\% of churn}
\label{fig:l-churn10}
\end{figure}

\begin{figure}
\center
\includegraphics[angle=-90,width=.5\textwidth]{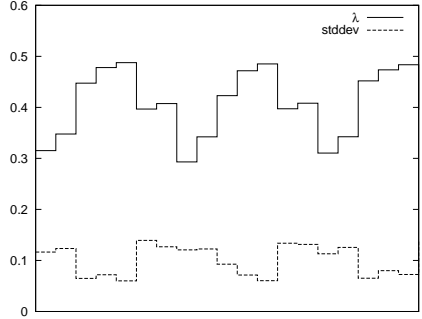}
\caption{$\lambda$ over time with 30\% of churn}
\label{fig:l-churn30}
\end{figure}

Figure~\ref{fig:l-churn10} shows the evolution of $\lambda$ over time in presence of 10\% of churn (10\% is relative to the initial number of nodes).
Each step stands for a round.
From a stable configuration where $\lambda$ oscillates between 0.52 and 0.48, it drops down to 0.4 every seven rounds due to arrivals and departures.
The structure is sensitive to churn in the sense that it significantly decreases the value of $\lambda$.
But on the other hand, proposed mixing protocol converges fast.
It needs two rounds to reach the average ``expectable'' value of $\lambda$.

Figure~\ref{fig:l-churn30} shows the evolution of $\lambda$ over time in presence of 30\% of churn (30\% is relative to the initial number of nodes).
Each step stands for a round.
Basically churn increases stretches the curve; down part are downer around 0.3, high values are quite stable around 0.45 and fluctuations are widespread.
While the previous one~\ref{fig:l-churn10} gives a good overview of the global behaviour of the protocol facing churn, this figure emphases some interesting details.
First it emphasizes that mixing protocol is not monotonic; it might decrease $\lambda$.
Second, the impact of arrivals and departures are distinct.
Moreover the magnitude of their impact is not predictable because the selection of bootstrap node is random. 
Starting from a stable situation arrivals will always decrease $\lambda$.
With the proposed join mechanism, a new comer is weakly connected to the rest of the system.
Starting from a stable situation gracefull departures will almost always decrease $\lambda$.
But with the proposed leave mechanism their impact is more subtle because; they can largely modify the tree balance which also implies links exchanges.
In some rare cases those exchanges (which could be thought as side effect shuffles of links) or the departure of weakly connected nodes could increase $\lambda$.

\section{Concluding Remarks}
\label{secConclusion}

We proposed for the first time in the context of overlay networks a generic scheme that transforms any tree overlay in an expander with constant 
node expansion with high probability. More precisely, we prove that 
a uniform random tree virtualization yields a node expansion at 
least $\frac{1}{480}$ with probability $1 - o(1)$.  Second, 
in order to demonstrate the effectiveness of our result in the context 
of real P2P networks we further propose and evaluate in different 
churn scenario a simple scheme for uniformly random tree 
virtualization in $O(log^2n)$ running time. Our scheme is totally 
distributed and uses only local information. Moreover, in the event 
of nodes join/leave or crash our scheme is self-healing.

The virtualization scheme itself is a promising approach and provides
several interesting questions. We enumerate the open problems related 
to our result:
\begin{itemize}
\item {\bf Better analysis of matching convegence}: As the 
simulation result exhibits, the convergent value of expansion computed 
from $\lambda$ is considerably larger than the theoretical bound. In addition,
the convegence time is also faster than the theoretical bound. Finding
improved bounds for both of the expansion and convegence time is an
open problem.
\item {\bf Effective utilization of expander propoerty}: In addition to
fault resiliency, expander graphs also offer the rapidly-mixing property 
of random walks on the graph. That is, MCMC-like sampling method 
effectively runs on our scheme. It is an interesting research direction 
that we utilize  expansion property for implementing some statistical 
operation over distributed datas or query load balancing.
\item {\bf Application of virtualization scheme to other overlays}:
The virtualization scheme is simple and generic, and thus we can apply
it to other well-known overlay algorithms such as Chord or Pastry. 
Clarifying the class of overlay networks where the virtualizaton scheme
efficiently works is a challenging problem.
\end{itemize}

\bibliographystyle{abbrv}
\bibliography{books,bib}
\section{Omitted proofs}

We explain the proof details omitted in the main body.

\subsection{Proof of Lemma \ref{lmaConnectedBoundarySize}}
\begin{proof}
We prove the case of $r(T) \not\in S$ by 
induction on the cardinality of $S$. 
({\bf Basis}) If $|S| = 1$, the lemma clearly holds because
every internal node except for the root of $T$ has degree three.
({\bf Inductive step}) Suppose as the induction hypothesis
that $|\nodeboundary{S}| \geq |S| + 2$ holds for any connected set 
$S$ of size $k$. we prove that for any $v \in I(V_T)$ that 
is connected to a node in $S$, $|\nodeboundary{(S \cup \{v\})}| 
\geq |S \cup \{v\}| + 2$ holds. For short, let $S' = S \cup \{v\}$. 
Since $S$ is connected, $v$ is connected 
to exactly one node in $S$. In other words, node $v$ has two neighbors of 
$v$ in neither $S$ nor $\nodeboundary{S}$, which are elements 
of $\nodeboundary{S'}$. In contrast, $v$ is an element of 
$\nodeboundary{S}$ but not in $\nodeboundary{S'}$. It follows that
$\nodeboundary{S'} \geq \nodeboundary{S} - 1 + 2$ holds. From the 
induction hypothesis, we obtain $\nodeboundary{S'} \geq 
\nodeboundary{S} + 1 \geq |S| + 1 + 2 = |S'| +2$.
The case $r(T) \in S$ is obviously deduced from the case of $r(T)
\not\in S$. The lemma is proved. \qed
\end{proof}

\subsection{Proof of Lemma \ref{lmaGeneralBoundarySize}}
\begin{proof}
Let $C_1, C_2, \cdots C_j, \cdots C_m$ be the set of connected components 
in $\mathrm{Ind}(S)$. In the case of $r(T) \in S$, we assume 
$r(T) \in C_1$ without loss of generality. We prove the lemma by induction
on $j$. ({\bf Basis}) It holds from Lemma \ref{lmaConnectedBoundarySize}
({\bf Inductive step}) 
Suppose $|\nodeboundary{S}| \geq |S| + j + 1$ as the induction hypothesis.
Consider adding a new component $C_{j+1}$ into $S$. Let $c$ be the 
number of nodes in $C_{j+1}$. Since $r(T) \not\in C_{j+1}$, from 
Lemma \ref{lmaConnectedBoundarySize}, $|\nodeboundary{V_{C_{m+1}}}| 
\geq c + 2$ holds. At most one node is shared by 
$\nodeboundary{S}$ and $\nodeboundary{V_{C_{m+1}}}$, we have 
$|\nodeboundary{S \cup V_{C_{j+1}}}| \geq |S| + j + 1 + c + 2 - 1 
\geq (|S| + c) + (j + 1)$. The lemma is proved. \qed
\end{proof}

\subsection{Proof of Lemma \ref{lmaInternalBoundary}}
\begin{proof}
We omit the subscript $\Pi$ of $Q_\Pi$ for short.
We divides $\overline{Q} \cap V_{X}$ into three mutually-disjoint
subset $\overline{S_1}$, $\overline{S_2}$, and 
$\overline{S_3}$: Let $\overline{S_1} \subseteq V_X \cap \overline{Q}$ 
be the set of nodes that have no neighbor belonging to $Q \cap V_{X}$,
$\overline{S_2} = \nodeboundary{\overline{S_1}} \cap \overline{Q}$, and 
$\overline{S_3} = (\overline{Q} \cap V_{X}) \setminus (\overline{S_1} \cup 
\overline{S_2})$ (see Figure \ref{fig:tree-S1S2S3}). 
Since $X$ is $S$-occupied, $\overline{S_2}$ 
consists only of internal nodes. Thus, from Corollary 
\ref{corolBoundarySize}, $|\overline{S_2}| = 
|\nodeboundary{\overline{S_1}}| \geq |\overline{S_1}|$ holds. 
By the definition of $\overline{S_2}$ and $\overline{S_3}$, 
$\overline{S_2} \subseteq (\nodeboundary{Q}) \cap V_{X}$ and 
$\overline{S_3} \subseteq (\nodeboundary{Q}) \cap V_{X}$ hold. 
Consequently, we have  
\begin{eqnarray*}
2|\nodeboundary{Q} \cap V_{X}| 
&\geq& 2(|\overline{S_2}| + |\overline{S_3}|) \\
&\geq& |\overline{S_2}| + 2|\overline{S_3}| + |\overline{S_1}|\\
&\geq& |\overline{Q} \cap V_{X}|.
\end{eqnarray*}
The lemma is proved. \qed
\end{proof}

\begin{figure}[t]
\begin{center}
\includegraphics[keepaspectratio,width=110mm]{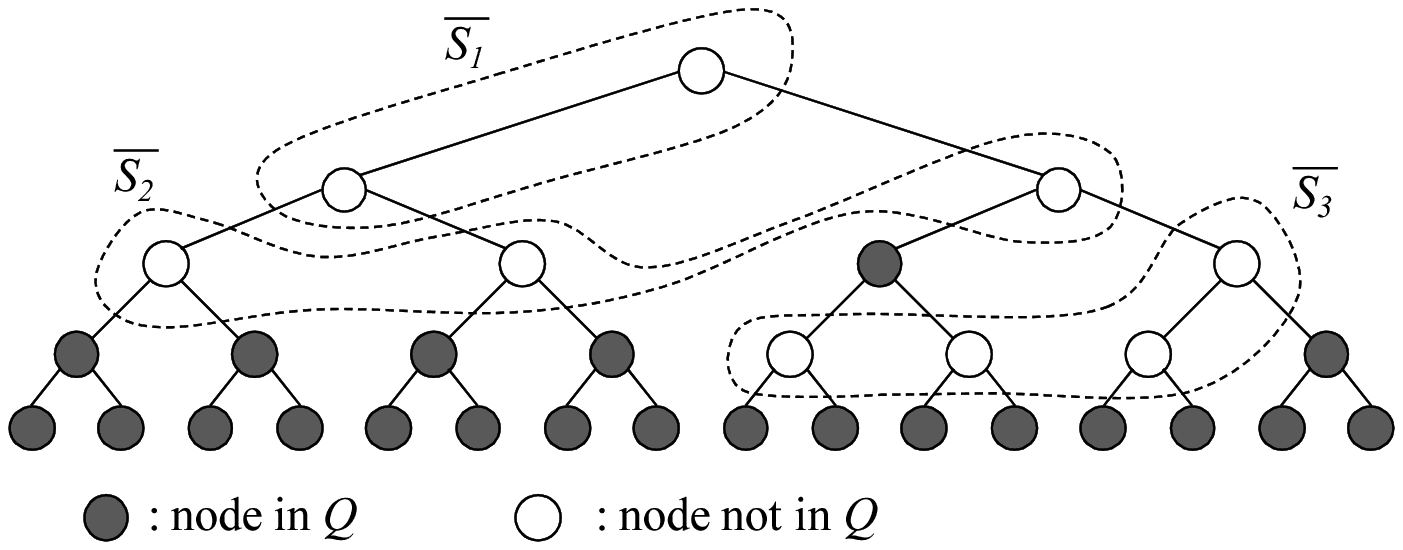}
\caption{Illustration of $\overline{S_1}$, $\overline{S_2}$, and
$\overline{S_3}$ in the proof of Lemma~\ref{lmaInternalBoundary}.}
\label{fig:tree-S1S2S3}
\end{center}
\end{figure}

\subsection{Proof of Lemma \ref{lmaContainmentProb}}

\begin{proof}
Let $Z = I(V_X)$ for short.
First, we fix a subset $S' \subseteq$ of $S$ with cardinality 
$\alpha|Z|$, and consider the probability that 
$\Pi(S') \subseteq Z$ holds.
Without loss of generality, we can regard $\Pi$ as a permutation on 
$Z$ whose first $|S|$ elements are mapped to $Z$. Thus, 
to compute the probability, it is sufficient to count the number 
of permutations where any element in $S'$ appears at the first 
$|Z|$ elements of $\Pi$. Dividing the counted number by 
$n!$, the probability can be calculated as follows:
\begin{eqnarray*}
\Pr(S' \subseteq X) &=&
\frac{1}{n!} {{n - |S'|}  \choose {|Z| - |S'|}} |Z|! 
(n - |Z|)! \\
&=& \frac{|Z|(|Z| - 1)(|Z| - 2) \cdots (|Z| - \alpha|Z| + 1)}{
n(n - 1) (n - 2) \cdots (n - \alpha|Z| + 1)} \leq 
\left(\frac{|Z|}{n}\right)^{\alpha|Z|}.
\end{eqnarray*}

Using the union bound, we can obtain the following bound:
\begin{eqnarray*}
\Pr(\Pi(S) \geq \alpha|Z|) &\leq&
\Pr(\bigcup_{S' \subset S | S' = \alpha|Z|} S' \subseteq Z) \\
&\leq& {|S| \choose \alpha|Z|}\left(\frac{|Z|}{n}\right)^{\alpha|Z|}.
\end{eqnarray*}
Since $|Z| = |S| - k$ holds, the lemma is proved. \qed
\end{proof}

\subsection{The Calculation Details in the Proof of Theorem \ref{thmExpansion}}

We describe the detailed calculation to lead the last inequality in 
the proof of Theorem \ref{thmExpansion}.
\begin{align*}
\lefteqn{\Pr\left(\bigcup_{S \subseteq L(V_T) | |S| \leq n/2} 
|\nodeboundary{Q}| < \frac{|S|}{32}\right)} \hspace{5mm}\\
&\leq \sum_{|S| = 1}^{n/2} \frac{|S|}{16} 
{2n \choose {|S|/16}}{|S| \choose (|S| + k)/12}
\left(\frac{|S|}{n}\right)^{11|S|/12}. \\
\intertext{Using the bound ${n \choose m} 
\leq (ne/m)^m$ and the condition $k < |S|/16$,} \\
&\leq \sum_{|S| = 1}^{n/2} 
\frac{|S|}{16}\left(\frac{32en}{|S|}\right)^{|S|/16}
\left(\frac{192e}{17} \right)^{17|S|/192}
\left(\frac{|S|}{n}\right)^{11|S|/12} \\
&\leq \sum_{|S| = 1}^{n/2}
\frac{|S|}{16}\left((32e)^{1/16}(192e/17)^{17/192}\right)^{|S|}
\left(\frac{|S|}{n}\right)^{(11/12 - 1/16)|S|} \\
\intertext{By numeric calculation, we have
$\log ((32e)^{1/16}(192e/17)^{17/192}) \leq 0.841$ and 
$(11/12 - 1/16) \geq 0.854$. Thus,} \\
&\leq \sum_{|S| = 1}^{n/2}
\frac{|S|}{16} 2^{0.841|S|} \left(\frac{|S|}{n}\right)^{0.854|S|} \\
&=  o(1).
\end{align*}

\Omit{
\section{Supplementary Materials of Simulation}

\subsection{The Details of Simulation Details}

The detailed scenario of the simulation is as follows:

\begin{description}
\item[Node join] We assume that a newly-joining node knows 
the physical address of some leaf node $u$ in the network. 
Let $v$ and $v'$ be the leaf and internal nodes that will be managed 
by the newly-joining physical node.  The node $u$ is replaced by 
a newly internal node $v'$. Then $v$ and $u$ becomes children of $v'$.
\item[Node leave] The adversary chooses a number of nodes to make them
leave. Since it is hard to simulate worst-case adversarial behavior, we
adopt a heuristic strategy: given a physical node $v$ with leaf $v_L$ and 
internal node $v_I$, let $h(v)$ be the height of the smallest subtree
containing both $v_L$ and $V_I$. Intuitively, the physical node $v$
with higher $h(v)$ has much contribution for avoiding the node boundary 
to be contained in a small subtree containing $v_L$. Following this 
intuition, the adversary always makes the node $v$ with highest $h(v)$
leave.
\item[Balancing] Most of tree-based overlay algorithms have some 
balancing mechanism. While the balancing mechanism has a number of
variations, we simply assume a standard rotation mechanism.
After a number of node joins and leaves, the tree is balanced by 
standard rotation operation. 
\item[Matching Reconstruction] We run once the matching-mixing process 
described in the previous section.
\end{description}

\subsection{Stability in the Environment without churn}

We ran 100 simulations of 100 rounds with 512 nodes and no churn.
The value of $\lambda$ is calculated at the beginning of each round.
These simulations tend to emphasize what is ``expectable'' from the mixing protocol and some of its dynamic properties.

\begin{figure}
\subfloat[$\lambda$ statistics per simulation]{
\includegraphics[angle=-90,width=.5\textwidth]{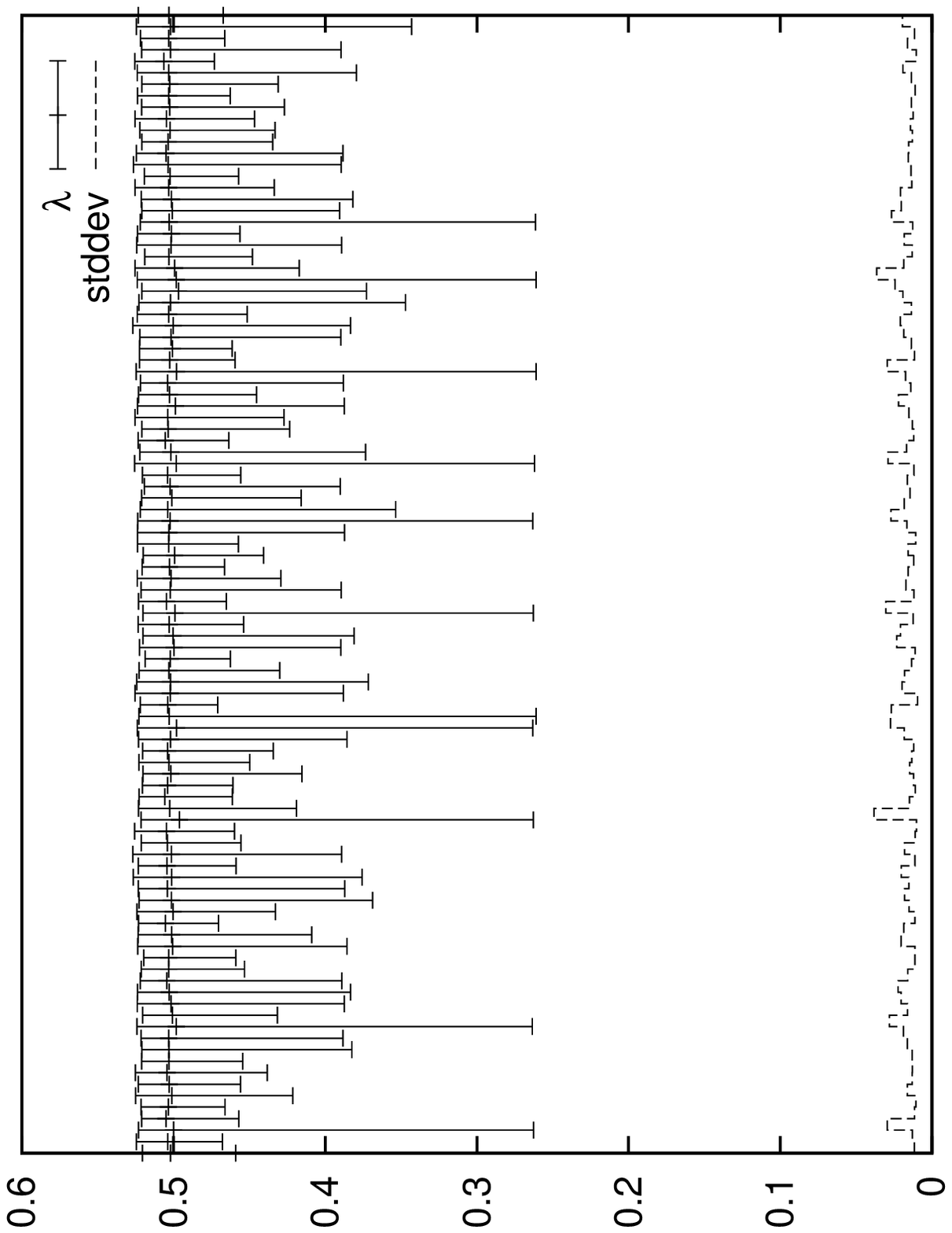}
\label{fig:lambda-sim}
}
\subfloat[$\lambda$ over time]{
\includegraphics[angle=-90,width=.5\textwidth]{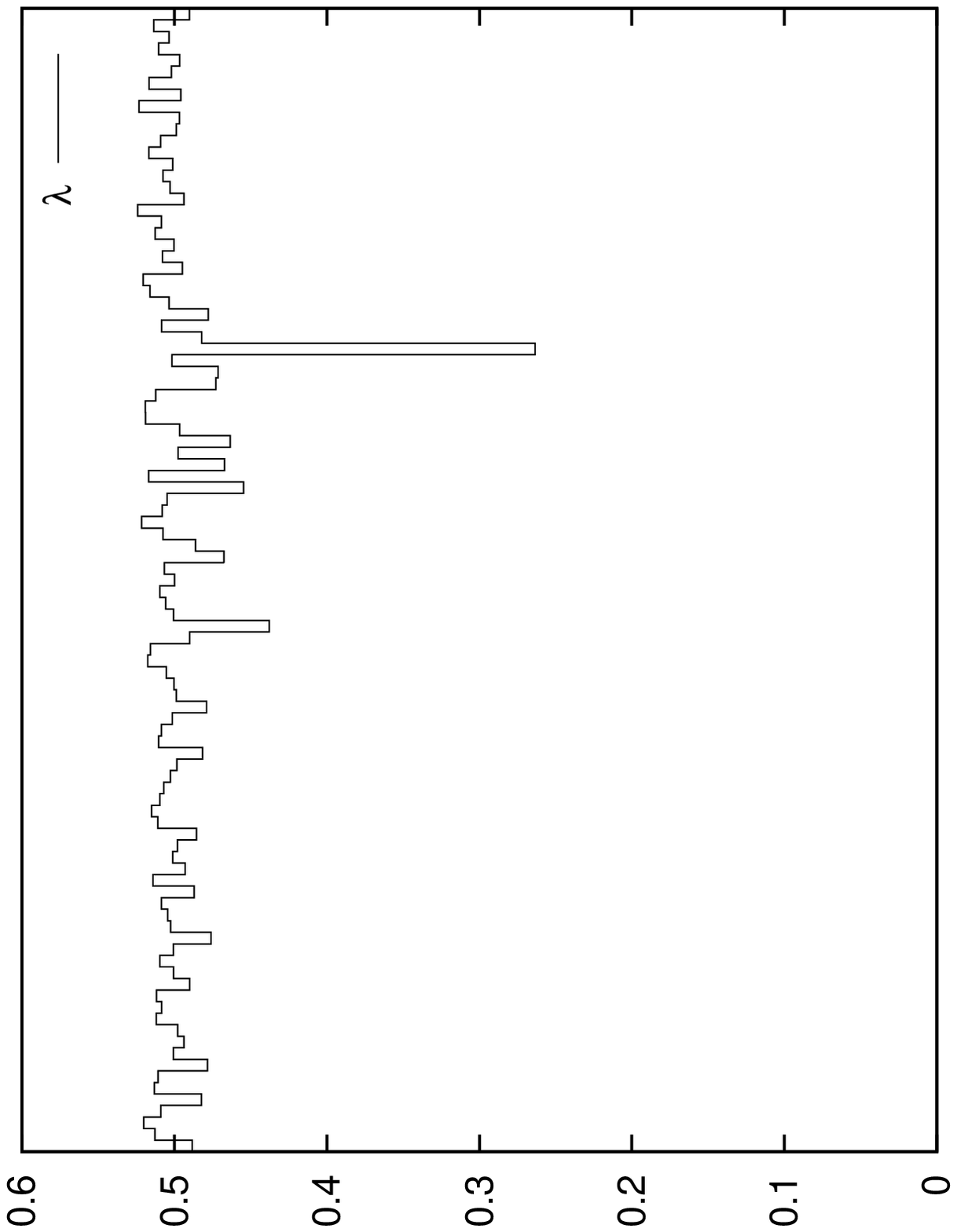}
\label{fig:lambda-time}
}
\end{figure}

Figure~\ref{fig:lambda-sim} shows statistical properties of $\lambda$.
On the curve with errorbars, each point stands for a simulation and indicates the minimum, maximum and average value reached by $\lambda$.
On the other curve, each step stands for a simulation and indicates the standard deviation of $\lambda$.
Note that the low standard deviation and the closeness of average and maximum reached values indicate that the minimum is rarely reached. 
Basically it is obtained when most nodes become responsible of internal nodes that are ``close'' from their leaves.
Intuitively if each node is responsible of an ancestor of its leaf, there is no additional links between the left and the right subtrees of the root.
In that case we do not take benefit of mapping and get bad expansion properties inherited from tree structures.

We select a representative simulation to monitor the dynamic evolution of $\lambda$ without churn.
Figure~\ref{fig:lambda-time} shows the evolution of $\lambda$ over time during that simulation.
Each step stands for a round.
This curve confirm the overall analysis; most of the time $\lambda$ oscillate around 0.5 which is close to its maximum reached value 0.524.
The minimum reached value 0.263 is obtained once and is the only significant deviation from the range of average values.
}

\appendix

\end{document}